\newcommand{\cM}{\mathcal{M}}
\newcommand{\cN}{\mathcal{N}}
\newcommand{\balpha}{\boldsymbol \alpha}
\newcommand{\bo}{\boldsymbol o}
\newcommand{\bx}{\boldsymbol x}
\newcommand{\bP}{\boldsymbol P}
\newtheorem{myth}{Theorem}
\newtheorem{myprop}[myth]{\bf Proposition}
\newtheorem{myle}[myth]{\bf Lemma}
\def\BibTeX{{\rm B\kern-.05em{\sc i\kern-.025em b}\kern-.08em
    T\kern-.1667em\lower.7ex\hbox{E}\kern-.125emX}}
\begin{document}

\title{Power Minimization for NOMA-assisted Pinching Antenna Systems With Multiple Waveguides}
\author{Yaru Fu, Fuchao He, Zheng Shi, and Haijun Zhang,~{\IEEEmembership{Fellow,  IEEE}}

\thanks{
This work was supported in part by the grant from the Research Grants Council (RGC) of the
Hong Kong Special Administrative Region, China, under Reference No.
UGC/FDS16/E02/22, in part by the Hong Kong Research Matching Grant (RMG) in the Central Pot under Project No. CP/2022/2.1, in part by the Team-based Research Fund under Project No. TBRF/2024/1.10, and in part by the National Natural Science Foundation
of China under Grant 62171200.

Y. Fu is with the School of Science and Technology,  Hong Kong Metropolitan University, Hong Kong SAR, 999077, China (e-mail: yfu@hkmu.edu.hk).

F. He and Z. Shi are with the School of Intelligent Systems Science and Engineering, Jinan University, Zhuhai 519070, China (e-mail: hefuchao517@gmail.com;~zhengshi@jnu.edu.cn).

Haijun Zhang is with the National School of Elite
Engineering, Beijing Engineering and Technology Research Center for Convergence Networks and Ubiquitous Services, University of Science and
Technology Beijing, Beijing, 100083, China (e-mail: haijunzhang@ieee.org).
 }}

\maketitle

\begin{abstract}
The integration of pinching antenna systems with non-orthogonal multiple access (NOMA) has emerged as a promising technique for future 6G applications. This paper is the first to investigate power minimization for NOMA-assisted pinching antenna systems utilizing multiple dielectric waveguides. We formulate a total power minimization problem constrained by each user's minimum data requirements, addressing a classical challenge. To efficiently solve the non-convex optimization problem, we propose an iterative algorithm. Furthermore, we demonstrate that the interference function of this algorithm is standard, ensuring convergence to a unique fixed point. Numerical simulations validate that our developed algorithm converges within a few steps and significantly outperforms benchmark strategies across various data rate requirements. The results also indicate that the minimum transmit power, as a function of the interval between the waveguides, exhibits an approximately oscillatory decay with a negative trend.
\end{abstract}

\smallskip

\begin{IEEEkeywords}
Minimum data rate, non-orthogonal multiple access (NOMA), pinching antenna, and power minimization.
\end{IEEEkeywords}

\section{Introduction}
The forthcoming sixth-generation wireless cellular networks (6G) aim to deliver unprecedented data rates and seamless connectivity across diverse applications \cite{DT_self2}. To meet the stringent quality of service demands of 6G, advanced technologies are being explored to enhance system capacity and reliability. Among these, non-orthogonal multiple access (NOMA) has emerged as a key enabler due to its capability of improving spectral efficiency by allowing multiple users to share the same resource block through power-domain multiplexing \cite{Yuanwei_NOMA}. Unlike traditional orthogonal schemes, NOMA maximizes spectrum utilization, making it a promising solution for high-capacity networks. Meanwhile, technologies such as intelligent reflecting surfaces (IRS), fluid antennas, and movable antennas offer dynamic channel adaptation to mitigate interference and improve signal quality \cite{movable,6G_2025}.

However, maintaining reliable line-of-sight (LoS) links remains a significant challenge for 6G networks operating in high-frequency millimeter wave and terahertz bands. These bands, while supporting massive data rates, are highly susceptible to blockages from buildings, trees, and even moving objects, making uninterrupted LoS essential. Movable and fluid antennas offer limited repositioning capabilities, often restricted to a few wavelengths, which proves insufficient against large or immovable obstacles \cite{ding}. IRS presents an alternative by reflecting signals to form virtual LoS paths; however, this approach doubles the transmission distance, leading to ``double path loss" and significantly weakening signal strength \cite{hong_IRS}. Moreover, IRS requires precise placement and fine-tuned control, increasing deployment complexity and cost. These limitations highlight the need for a more flexible and resilient solution to ensure robust LoS connectivity in 6G networks.

To address these challenges, pinching antennas have been proposed \cite{PA_docomo}, which utilize dielectric waveguides and specialized materials to pinch the waveguides, enabling controlled radiation of signals into the surrounding environment. A key advantage of pinching antennas is their ability to slide along the waveguide, dynamically adjusting radiation points to create on-demand LoS links. This flexibility surpasses traditional fixed-location antennas, offering a more adaptive and efficient solution for overcoming LoS blockages.
Integrating NOMA with pinching antennas combines their respective advantages to enhance system performance further  \cite{PA_ding,PA_yunhui}. NOMA optimally allocates power among multiple users, leveraging the strong LoS links created by pinching antennas, improving capacity, fairness, and spectral efficiency. By integrating NOMA's resource-sharing capability with pinching antennas' adaptable LoS provisioning, a robust framework for next-generation wireless networks can be realized, paving the way for high-performance 6G communications.

Some initial studies have explored the integration of NOMA with pinching antenna systems \cite{PANOMA_rate,NOMA_PA_ding}. More specifically, in \cite{PANOMA_rate}, the authors aimed to maximize the sum rate of pinching antenna-assisted NOMA systems with multiple dielectric waveguides. They proposed an alternating optimization algorithm to obtain a suboptimal solution in polynomial time. In \cite{NOMA_PA_ding}, the optimization of pinching antenna placement and activation was investigated to maximize the throughput of a NOMA system, considering a single waveguide setup. Although the effectiveness of pinching antenna-assisted NOMA systems has been validated, existing studies have focused solely on rate maximization. The problem of minimizing total power consumption while satisfying minimum data rate requirements, a fundamental challenge in wireless cellular networks, has yet to be fully addressed. This gap motivates our work. For simplicity, our main contributions are outlined below:
\begin{itemize}
    \item We consider a NOMA-assisted pinching antenna system with multiple dielectric waveguides, where each waveguide serves multiple NOMA users. In this setting, a user's achievable rate is jointly determined by the power control methods of all other users associated with the same or different waveguides.
    \item We formulate a classical power control problem that aims to minimize the total power consumption of all users while considering the minimum data rate requirements. We propose an iterative power allocation approach that can achieve a unique fixed point within a few steps. Furthermore, we prove that the interference function of the developed method is standard, which guarantees the convergence performance of the solution.
    \item Finally, extensive numerical simulations are presented to validate the effectiveness and efficiency of our developed method. The results reveal that the minimum required transmission power, in relation to the interval between waveguides, exhibits an approximate oscillatory decay with a negative exponential envelope. A rigorous analysis is also provided to explain the underlying reasons for this behavior.
\end{itemize}
The remainder of this paper is organized as follows. In Section II, we elaborate on the details of the system model for our considered multiple waveguide-derived NOMA-pinching antenna system. The formulated minimization problem is presented in Section III. An iterative algorithm and its property analysis are provided in Section IV. Simulation results are discussed in Section V. Finally, we conclude the paper and outline future research directions in Section VI.
\section{System Model}
\begin{figure}[t]
    \centering
    \includegraphics[width=0.93\linewidth]{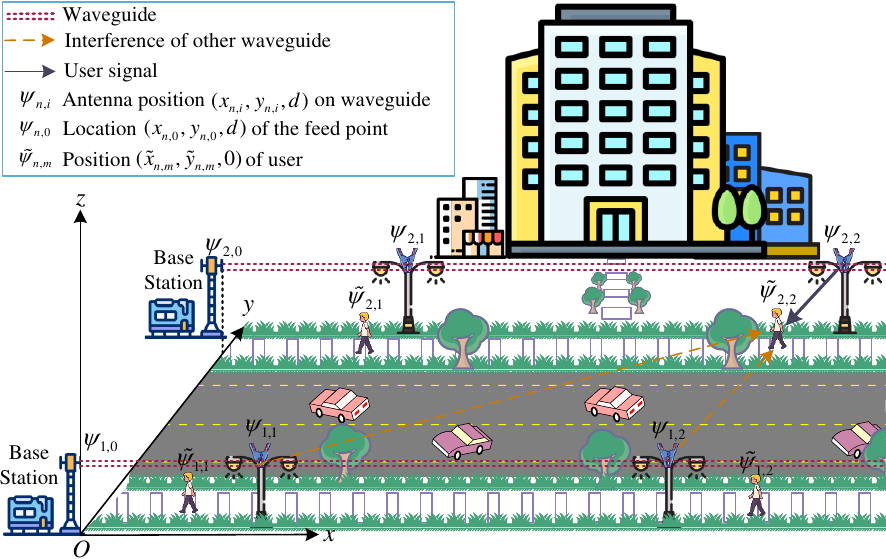}
    \caption{System model of the NOMA-assisted pinching antenna network with multiple waveguides.}
    \label{fig:system_model}
\end{figure}

As illustrated in Fig. \ref{fig:system_model}, we consider a pinching antenna system with $N$ waveguides, each equipped with $M$ pinching antennas to accommodate $M$ mobile users using NOMA. Define $\psi_{n,i}=(x_{n,i},y_{n,i},d)$ as the location of the $i$-th antenna on waveguide $n$, where $d$ depicts the hight of the waveguide. In addition, we use $\tilde \psi_{n,m}=(\tilde x_{n,m},\tilde y_{n,m},0)$ to represent the position of user $m$ that is served by antenna $m$ on waveguide $n$. 
The transmit power for waveguide $n$ is denoted by $P_n$.  Since NOMA is used, the transmit signal of waveguide $n$ can be expressed as follows:
\begin{equation}
\bx_n=\sum_{m=1}^{M}\sqrt{P_n\alpha_{n,m}}\bx_{n,m},
\end{equation}
where $\bx_{n,m}$ and $\alpha_{n,m}$ denote the signal and the power allocation coefficient for user $m$ served by waveguide $n$, respectively.

Let $h_{n,m}$ be the channel gain of the user served by antenna $m$ on waveguide $n$. Based on \cite{ding}, it is given below:
\begin{equation}\label{channel_gain}
    h_{n,m}=\sum_{i=1}^{M} \frac{\sqrt{\eta}e^{-j\frac{2\pi}{\lambda}\lVert \tilde \psi_{n,m}- \psi_{n,i} \rVert}}{\lVert\tilde \psi_{n,m}- \psi_{n,i}\rVert}e^{-j\theta_{n,i}},
\end{equation}
in which $\eta=\frac{c^2}{16\pi^2f^2_c}$, $c$ is the speed of light, $f_c$ is the carrier frequency, $\theta_{n,i}=2\pi\frac{\psi_{n,0}-\psi_{n,i}}{\lambda_g}$ depicts the phase shift experienced at the $i$-th antenna on waveguide $n$ and $\lVert \cdot \rVert$ expresses Euclidean norm. Thereof, $\psi_{n,0}$ and  $\lambda_g$ are the location of the feed point of waveguide $n$ and the waveguide wavelength in a dielectric waveguide \cite{ding}.
Moreover, we define $h_{n',n,i,m}$ as the channel gain between the $i$-th antenna on the waveguide $n'$ and the $m$-th user associated with the waveguide $n$. Mathematically, we have
\begin{equation}
    h_{n',n,i,m}=\frac{\sqrt{\eta}e^{-j2\pi(\frac{1}{\lambda}\lVert\tilde\psi_{n,m}-\psi_{n',i}\rVert+\frac{1}{\lambda_g}\lVert\psi_{n',0}-\psi_{n',i}\rVert)}}{\lVert\tilde\psi_{n,m}-\psi_{n',i}\rVert}.
\end{equation}

For NOMA users associated with the same waveguide, it is challenging to identify the strong and weak ones, as their signal strength is primarily affected by their locations, antenna positions, and the transmit power of the other users. To characterize this, we use the normalized signal-to-inter-waveguide-interference-plus-noise ratio (nSIINR) for user $m$ served by the waveguide $n$, denoted as $\text{nSIINR}_{n,m}$, as a metric \cite{Yaru_TWC}. With the foregoing definitions,  $\text{nSIINR}_{n,m}$ is given as follows:
\begin{equation}  \label{nsinr}
\text{nSIINR}_{n,m}=\frac{P_n\alpha_{n,m}}{\sum_{n'=1,n'\neq n}^{N}\sum_{i=1}^M|\bar h_{n',n,i,m}|^2P_{n'}\alpha_{n',i} +\bar\sigma_{n,m}^2},
\end{equation}
where
$$|\bar h_{n',n,i,m}|^2=\frac{|h_{n',n,i,m}|^2}{|h_{n,m}|^2},$$
and $$\bar\sigma_{n,m}^2=\frac{\sigma^2}{|h_{n,m}|^2},$$
in which $\sigma^2$ represents the additive white Gaussian noise power.
For the users served by the waveguide $n$, we rank their nSINRs in ascending order. Denote by $\bo_n=(o_1, o_2, \ldots, o_M)$, which satisfies $\text{nSIINR}_{n,o_i} < \text{nSIINR}_{n,o_j}$ if $i<j$. Following the downlink NOMA decoding principle, the interference plus noise ratio of user $o_m$ served by the waveguide $n$, denoted by $\text{SINR}_{n,o_m}$, can be expressed as follows:
\begin{equation} \label{sinr}
    \text{SINR}_{n,o_m}=\frac{P_n\alpha_{n,o_m}}{\sum_{j=m+1}^M P_n\alpha_{n,o_j}+\text{IIN}_{n,o_m}},
\end{equation}
where $\text{IIN}_{n,o_m}$ can be taken as the \textit{normalized} inter-waveguide interference plus noise experienced by user $o_m$, which is quoted below:
\begin{equation} \label{IN}
\text{IIN}_{n,o_m}=\sum_{n'=1,n'\neq n}^{N}\sum_{i=1}^M|\bar h_{n',n,i,o_m}|^2P_{n'}\alpha_{n',i} +\bar\sigma_{n,o_m}^2.
\end{equation}
With the analysis, the data rate of this user can be expressed as follows:
\begin{equation} \label{rate}
R_{n,o_m}=W\log_2(1+\text{SINR}_{n,o_m}),
\end{equation}
where $n\in\cN$, $m\in\cM$, and $\text{SINR}_{n,o_m}$ is given in \eqref{sinr}.
\section{Problem Formulation}
In this paper, our objective is to minimize the total transmit power of all users while satisfying the data rate requirement of each user. For simplicity, we define $\balpha_n=(\alpha_{n,1}, \alpha_{n,2}, \ldots, \alpha_{n,M})$ as the power allocation coefficients for users associated with the waveguide $n$. In addition, let $\balpha=(\balpha_1, \balpha_2, \ldots, \balpha_N)$ be the power control coefficients for all users.  We also denote $\bP=(P_1,P_2,\ldots,P_N)$ as the power allocation vector of all the waveguides. Furthermore, denote by $\bar R_{n,m}$ the minimum data rate requirement for user $m$ served by the waveguide $n$. Mathematically, we have
\begin{align} \label{obj}
	& \underset{\balpha,\bP}{\text{min}} \sum_{n=1}^NP_n  \\
	\text{s.t.}\nonumber~
	&\text{C1}:~\sum_{m=1}^M \alpha_{n,m}=1,~n\in\cN,~\nonumber\\	&\text{C2}:~0\leq\alpha_{n,m}\leq1,~n\in\cN,~m\in\cM,\nonumber\\
	&\text{C3}:~0 \leq P_n\leq \bar P_n, ~n\in\cN,\nonumber\\
	&\text{C4}:~R_{n,m} \geq \bar R_{n,m},~n\in\cN,~m\in\cM,\nonumber
\end{align}
where C1 indicates that, for each waveguide, the sum of the power coefficients of its served users should be 1. C2 ensures that the power allocation coefficient for each user lies between 0 and 1.   C3 limits the maximum transmission power of waveguide $n$ to $\bar P_n$. Finally, C4 ensures that each user achieves its minimum data rate. The formulated optimization problem \eqref{obj} is non-convex because $R_{n,m}$ is non-convex with respect to $\bP$. Therefore, obtaining its optimal solution is challenging. In the subsequent section, we will present our developed time-efficient power allocation algorithm with property assurance.

\section{Algorithm Design and Property Analysis} \label{alg}
In this section, we introduce a distributed power allocation scheme for NOMA-assisted pinching antenna systems.
This method is designed to achieve the fixed point of our formulated problem \eqref{obj}.
\subsection{Algorithm Design}
Before elaborating on the detailed algorithm design, we give the following lemma.
\begin{myle}
   The optimal solution to problem \eqref{obj} is achieved when the inequalities in C4 hold as equalities.
\end{myle}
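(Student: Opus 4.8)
The plan is a contradiction argument built on an infinitesimal reduction of a single ``per-user'' power. It is convenient first to rewrite \eqref{obj} in the variables $p_{n,m}=P_n\alpha_{n,m}$, so that $P_n=\sum_{m=1}^M p_{n,m}$, the objective becomes $\sum_{n=1}^N\sum_{m=1}^M p_{n,m}$, C1--C2 reduce to $p_{n,m}\ge 0$, C3 becomes $\sum_m p_{n,m}\le\bar P_n$, and $\text{SINR}_{n,o_m}$ in \eqref{sinr} depends on the $p$'s alone. The rate requirement C4 is then equivalent to the SINR requirement $\text{SINR}_{n,o_m}\ge\gamma_{n,o_m}$ with $\gamma_{n,o_m}=2^{\bar R_{n,o_m}/W}-1$. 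A useful first observation is that, in these variables, $\text{nSIINR}_{n,m}$ equals $p_{n,m}$ divided by a quantity depending only on $\{p_{n',i}\}_{n'\ne n}$ and fixed channel/noise constants; hence the ordering $\bo_n$ is unaffected by changes of $p_{n,\cdot}$ alone, and more generally all the orderings are locally constant at any point where the $\text{nSIINR}$ values are pairwise distinct, which is implicit in the strict inequalities defining $\bo_n$.

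Assume, toward a contradiction, that an optimal $(\balpha,\bP)$ has a slack constraint, say $R_{n,o_k}>\bar R_{n,o_k}$ for some waveguide $n$ and some position $k$. Since $R_{n,o_k}=0$ whenever $p_{n,o_k}=0$ while $\bar R_{n,o_k}\ge 0$, slackness forces $p_{n,o_k}>0$. I would then perturb to a new point by replacing $p_{n,o_k}$ with $p_{n,o_k}-\epsilon$ for small $\epsilon>0$, leaving all other per-user powers fixed, and verify feasibility for $\epsilon$ small enough: (i) the $\text{nSIINR}$ rankings on every waveguide are unchanged, so the decoding order is preserved; (ii) the constraint of $o_k$ itself has right-hand side $\gamma_{n,o_k}\big(\sum_{j>k}p_{n,o_j}+\text{IIN}_{n,o_k}\big)$, which does not contain $p_{n,o_k}$, so by continuity it stays strictly satisfied; (iii) each user $o_m$ with $m<k$ on the same waveguide has $p_{n,o_k}$ only on its right-hand side, which strictly decreases, so its constraint becomes slacker; (iv) the constraints of users $o_m$ with $m>k$ on that waveguide are untouched; (v) on every other waveguide $n'$, $\text{IIN}_{n',o_m}$ decreases, hence $\text{SINR}_{n',o_m}$ and $R_{n',o_m}$ increase, so those constraints still hold; (vi) $p_{n,o_k}-\epsilon\ge 0$ and $\sum_m p_{n,m}$ decreases, so C3 and nonnegativity hold and the induced coefficients still satisfy C1--C2. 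The new point is thus feasible but has objective value $\epsilon$ smaller, contradicting optimality; hence every constraint in C4 is tight at any optimum.

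The step I expect to require the most care is (i): perturbing $p_{n,o_k}$ changes $\text{nSIINR}_{n,o_k}$ and, through the inter-waveguide interference terms, shifts the $\text{nSIINR}$ values of the users on all other waveguides as well, so one must argue that no ranking $\bo_{n'}$ jumps. This follows from continuity together with the standing assumption that the $\text{nSIINR}$ values are pairwise distinct at the point under consideration, so a sufficiently small $\epsilon$ preserves every strict inequality; if ties are to be allowed, the same conclusion holds after fixing any consistent tie-breaking rule for the ordering. Everything else is a routine monotonicity check on the closed forms \eqref{sinr}--\eqref{rate}.
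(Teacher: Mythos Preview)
Your contradiction argument is correct and matches the paper's stated approach, which simply says the lemma ``can be proved using the method of contradiction, but the details are omitted here for brevity.'' Your treatment is in fact more complete than the paper's: the careful handling of point (i)---that the decoding orders $\bo_{n'}$ remain stable under a sufficiently small perturbation---is exactly the kind of detail the paper glosses over.
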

The lemma can be proved using the method of contradiction, but the details are omitted here for brevity. We start with user $o_M$, who is associated with waveguide $n$, as an example and assume that the transmit power of all users associated with the other $N-1$ waveguides is given. According to \eqref{rate}, Lemma 1, and constraint C4, the minimum required transmit power for user $o_M$ can be straightforwardly obtained. This power depends solely on the transmit power of all other users with $N-1$ waveguides and is expressed as follows:
\begin{equation} \label{p_M}
p_{n,o_M}=(2^{\frac{\bar R_{n,o_M}}{W}}-1)\text{IIN}_{n,o_M},
\end{equation}
where $\text{IIN}_{n,o_M}$ is defined in \eqref{IN}.

Next, we focus on user $o_{M-1}$, who is served by waveguide $n$ as well. Based on \eqref{rate}, \eqref{p_M}, Lemma 1, and constraint C4, we can calculate the minimum required transmission power for this user, as shown below:
\begin{equation} \label{p_M-1}
p_{n,o_{M-1}}=(2^{\frac{\bar R_{n,o_{M-1}}}{W}}-1)(p_{n,o_M} +\text{IIN}_{n,o_{M-1}}),
\end{equation}
in which $\text{IIN}_{n,o_{M-1}}$ and $p_{n,o_M}$ are obtained in \eqref{IN} and \eqref{p_M}, respectively.

By following similar procedures, we can determine the required power for user $o_{M-2}$ to user $o_1$, which are served by waveguide $n$. Similarly, the transmit power for all other users can also be calculated. For each waveguide, based on all its served users' transmit power, it is straightforward to obtain $P_n$ and $\balpha_n$. However, this is not the final step. The update of the transmission power for other users with different waveguides will affect the decoding order of the NOMA users within this waveguide. This, in turn, impacts the power allocation among the NOMA users. Therefore, we need to repeat the previous steps iteratively until no user's transmission power changes anymore. For brevity, we summarize the proposed power allocation method in Algorithm 1.
\begin{algorithm}[ht]
\caption{Proposed Power Control Algorithm}
\begin{algorithmic}[1]
\REQUIRE An initial transmit power of each user, i.e., $p_{n,m}$, for $n\in\cN$ and $m\in\cM$.
\STATE Set $t=1$ and the maximum iteration number as $T$.
\WHILE{$t\leq T$}
\STATE For each user, determine its nSINR based on \eqref{nsinr}, using the transmit power of the other users from the previous iteration, i.e., iteration $t-1$.
\STATE For each waveguide, determine the optimal decoding order for its associated users by arranging them in ascending order of their nSINRs.
\STATE Calculate the minimum required transmit power for users from user $o_M$ to user 1 based on the method introduced in Section IV-A.
\STATE {Update $t=t+1$.}
\ENDWHILE
\RETURN  $\balpha$ and $\bP$
\end{algorithmic}
\end{algorithm}
\subsection{Property Analysis}
For brevity, we define $\bP_{\bar o_m}$ as the transmit power of all users except user $o_m$ with waveguide $n$. It is evident that $p_{n,o_{m}}$ is a function of $\bP_{\bar o_m}$, denoted as $p_{n,o_{m}}=f(\bP_{\bar o_m})$, which we termed as the interference function of this user. Based on the discussion in Section IV-A, we know that for user $o_m$ with waveguide $n$, the required minimum transmit power is given by
\begin{equation} \label{f}
\begin{split}
    p_{n,o_{m}}&=f(\bP_{\bar o_m})\\
&=(2^{\frac{\bar R_{n,o_{m}}}{W}}-1)(\sum_{j=m+1}^Mp_{n,o_j} +\text{IIN}_{n,o_{m}}),
\end{split}
\end{equation}
where $\text{IIN}_{n,o_{m}}$  is related to the transmit power of all other users associated with the remaining $N-1$ dielectric waveguides, as depicted in \eqref{IN}.

With the aforementioned definitions, we summarize the properties of the developed scheme in the following theorem.
\begin{myth} \label{myth}
	The proposed power allocation Algorithm 1 converges to the unique fixed point for problem \eqref{obj}.
\end{myth}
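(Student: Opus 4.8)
The plan is to recognize Algorithm~1 as the fixed-point iteration $\bP(t)=\cI(\bP(t-1))$, where $\cI$ is the interference function assembled from the per-user required powers in \eqref{f} (after substituting $P_n\alpha_{n,m}=p_{n,m}$), and then to invoke Yates' framework for \emph{standard interference functions}: if $\cI$ is (i)~positive, (ii)~monotone, and (iii)~scalable, then a fixed point, if one exists, is unique, and the iteration converges to it from any nonnegative initial profile; the same holds for the elementwise-capped map $\min\{\cI(\bP),\bar{\bP}\}$ enforcing C3. By Lemma~1 the optimizer of \eqref{obj} must satisfy C4 with equality, which is exactly a fixed point of $\cI$; hence verifying (i)--(iii) delivers existence of the optimum via feasibility, together with uniqueness and global convergence of Algorithm~1.

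First I would fix a decoding-order tuple $\bo=(\bo_1,\dots,\bo_N)$ and unroll the recursion \eqref{f} from $o_M$ down to $o_1$ on each waveguide. Since $p_{n,o_m}$ depends only on $p_{n,o_{m+1}},\dots,p_{n,o_M}$ and on $\text{IIN}_{n,o_m}$, and $\text{IIN}_{n,o_m}$ is an affine, nonnegatively-weighted function of the powers of the \emph{other} $N-1$ waveguides plus the strictly positive constant $\bar\sigma^2_{n,o_m}$, back-substitution shows the order-$\bo$ map is affine, $\cI^{\bo}(\bP)=\boldsymbol A_{\bo}\bP+\bb_{\bo}$, with $\boldsymbol A_{\bo}\ge\bzero$ (indeed with zero diagonal blocks) and $\bb_{\bo}>\bzero$. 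For such maps positivity is immediate, monotonicity follows from $\boldsymbol A_{\bo}\ge\bzero$, and scalability is strict because $\mu(\boldsymbol A_{\bo}\bP+\bb_{\bo})=\boldsymbol A_{\bo}(\mu\bP)+\mu\bb_{\bo}>\boldsymbol A_{\bo}(\mu\bP)+\bb_{\bo}$ for $\mu>1$. Thus every $\cI^{\bo}$ is a standard interference function, and so is any pointwise minimum $\min_{\bo}\cI^{\bo}$ of them (positivity is clear; for monotonicity, if $\bP'\ge\bP$ and $\bo^{\star}$ attains the minimum at $\bP$ then $\min_{\bo}\cI^{\bo}(\bP)=\cI^{\bo^{\star}}(\bP)\ge\cI^{\bo^{\star}}(\bP')\ge\min_{\bo}\cI^{\bo}(\bP')$; scalability is analogous at the $\bo^{\star}$ attaining the minimum of $\mu\cI(\bP)$).

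The crux, and the step I expect to demand the most care, is that the true map $\cI$ does not use a fixed order: Step~4 re-sorts each waveguide's users by ascending $\text{nSIINR}$, so the order is a discontinuous function of $\bP$. I would resolve this by showing that the ascending-$\text{nSIINR}$ rule coincides with the SIC order that minimizes the required powers, so that $\cI=\min_{\bo}\cI^{\bo}$ componentwise and hence inherits standardness from the previous step; this uses the optimality of channel-ordered SIC in downlink NOMA, as exploited in \cite{Yaru_TWC}. Reconciling the sorting key (which itself involves the powers being computed) with power-minimality is the delicate point; should this identity fail for arbitrary inputs, the fallback is to verify (ii)--(iii) directly for the $\text{nSIINR}$-ordered $\cI$ by arguing that the induced re-ordering can only lower the required power, which suffices for the same conclusion.

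Finally, with $\cI$ (hence its C3-capped version) established as a standard interference function, Yates' theorem yields that Algorithm~1 converges from the arbitrary initialization it permits; the fixed point is automatically unique; and feasibility of \eqref{obj} under the budgets $\bar P_n$ guarantees a fixed point inside the box C3. Invoking Lemma~1 identifies this unique fixed point with the optimal solution of \eqref{obj}, which completes the proof.
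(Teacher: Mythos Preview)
Your plan is essentially the paper's own argument: verify that the per-user interference map \eqref{f} is positive, monotone, and scalable, and then invoke Yates' standard-interference-function theorem. The paper's proof simply asserts these three properties for $f$ and cites \cite{yate}; it does not address the order-dependence you correctly flag as the delicate point, so your treatment via the pointwise minimum $\min_{\bo}\cI^{\bo}$ over decoding orders is strictly more careful than what the paper offers.

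One small slip to fix: in your monotonicity step for $\min_{\bo}\cI^{\bo}$, you picked the minimizing order $\bo^{\star}$ at the \emph{smaller} argument $\bP$ and then wrote $\cI^{\bo^{\star}}(\bP)\ge\cI^{\bo^{\star}}(\bP')$, which goes the wrong way since each $\cI^{\bo}$ is nondecreasing. The standard fix is to take $\bo^{\star}$ at the \emph{larger} argument: if $\bP'\ge\bP$ and $\bo^{\star}$ attains the minimum at $\bP'$, then $\min_{\bo}\cI^{\bo}(\bP')=\cI^{\bo^{\star}}(\bP')\ge\cI^{\bo^{\star}}(\bP)\ge\min_{\bo}\cI^{\bo}(\bP)$. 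With this correction your outline goes through.
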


\begin{proof}
According to \eqref{f}, we can check that the interference function $f$ satisfies the following three properties: (1) $f(\bP_{\bar o_m}) \geq 0$; (2) if $\bP_{\bar o_m} \geq \bP'_{\bar o_m}$, then $f(\bP_{\bar o_m}) \geq f(\bP'_{\bar o_m})$; and (3) for all $\beta >0$, $\beta f(\bP_{\bar o_m}) \geq f(\beta\bP_{\bar o_m})$. If a function satisfies the three criteria, we say that it is standard. Based on \cite{yate}, the iterative algorithm using a standard function
converges to the unique fixed point. This completes the proof.
\end{proof}
\begin{myprop}
   Based on Section IV-A, for the power allocation for users associated with each waveguide, only the normalized inter-waveguide interference plus noise is required. The channel gain and specific power control methods for other users are not necessary. Thereby, the developed power allocation method is distributed and applicable to practical time-sensitive wireless networks.
\end{myprop}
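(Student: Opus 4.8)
The plan is to establish the proposition by making explicit the information-dependency structure of the recursion in Section~IV-A, and then arguing that every quantity Algorithm~1 touches is either locally known at a waveguide or can be obtained by its served users as a single measured scalar. Concretely, I would first invoke \eqref{p_M}, \eqref{p_M-1} and \eqref{f} and prove, by backward induction on $m$ from $M$ down to $1$, the claim: for each $n$ and each $m$, the minimum required power $p_{n,o_m}$ is a function only of the rate targets $\bar R_{n,o_M},\ldots,\bar R_{n,o_m}$, the bandwidth $W$, and the scalars $\text{IIN}_{n,o_M},\ldots,\text{IIN}_{n,o_m}$. The base case $m=M$ is immediate from \eqref{p_M}; the inductive step follows from \eqref{f}, since $p_{n,o_m}$ there is an affine function of $\sum_{j>m}p_{n,o_j}$ and $\text{IIN}_{n,o_m}$, and by the induction hypothesis each $p_{n,o_j}$ with $j>m$ already has the asserted dependence. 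Aggregating over $m$, waveguide $n$ recovers $P_n=\sum_m p_{n,o_m}$ and $\balpha_n$ from the same data.

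Second, I would argue that $\text{IIN}_{n,o_m}$, although written in \eqref{IN} as a sum over the cross-waveguide channel gains $|\bar h_{n',n,i,o_m}|^2$ and the products $P_{n'}\alpha_{n',i}$, is in fact a single nonnegative number that user $o_m$ can read off directly at its receiver: it equals the total received power contributed by all waveguides $n'$ with $n'\neq n$, plus thermal noise, normalized by the user's own effective channel gain $|h_{n,m}|^2$. Hence the individual factors $|\bar h_{n',n,i,o_m}|^2$ and the individual power-allocation products $P_{n'}\alpha_{n',i}$ of the other $N-1$ waveguides never have to be known or signalled---only their aggregate effect, measured locally, is required. Within one pass of Algorithm~1 the other waveguides' powers are frozen at their previous-iteration values, so this measured scalar is well defined.

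Third, I would check that the one remaining algorithmic step, the determination of the decoding order in Step~4, is likewise local: by \eqref{nsinr} the ordering key $\text{nSIINR}_{n,m}=P_n\alpha_{n,m}/\text{IIN}_{n,m}$ uses only $P_n\alpha_{n,m}$, which waveguide $n$ itself assigns, together with the scalar $\text{IIN}_{n,m}$ fed back by user $m$. Combining the three points, waveguide $n$ can execute every line of Algorithm~1---ordering, the backward recursion \eqref{p_M}--\eqref{f}, and the resulting $P_n,\balpha_n$---using only its own users' rate requirements and their scalar IIN feedback, with no exchange of channel state or power-control information across waveguides, and at a cost of only $O(M)$ scalar operations per pass per waveguide. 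This is precisely the distributedness and low-complexity assertion of the proposition.

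The main obstacle here is not a hard inequality but a modelling subtlety: one must verify that $\text{IIN}_{n,o_m}$ is genuinely observable at the receiver and does not covertly require resolving per-interferer quantities, and that the measurement/feedback is consistent with the iterate structure (the other waveguides' powers being held fixed during a pass). Making that argument precise---essentially a careful reading of \eqref{nsinr}--\eqref{IN} in terms of what a receiver can and cannot measure---is the only nontrivial part; the induction itself is routine.
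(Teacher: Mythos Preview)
Your proposal is correct and, in fact, considerably more detailed than what the paper itself offers: the paper states the proposition as a direct observation from the structure of the recursion in Section~IV-A and provides no separate proof. Your backward induction on \eqref{p_M}--\eqref{f}, together with the measurability argument for $\text{IIN}_{n,o_m}$ and the locality check for the decoding-order step via \eqref{nsinr}, makes explicit exactly the dependency structure the paper is implicitly relying on. The only caveat is that your third point slightly overstates what is needed: since at each iteration the other waveguides' powers are frozen, ordering users by $\text{nSIINR}_{n,m}$ is equivalent to ordering them by $1/\text{IIN}_{n,m}$, so even the factor $P_n\alpha_{n,m}$ is not strictly required for the sort---only the fed-back scalars $\text{IIN}_{n,m}$ matter, which strengthens rather than weakens your conclusion.
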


In the next section, we will use extensive numerical results to show that the developed algorithm can converge within several iterations.

\section{Numerical Results}
In this section, we present numerical simulation results to validate the performance of the proposed power control method for a NOMA-pinching antenna system with multiple waveguides. For brevity, the default values of the simulation parameters, mainly following the settings in \cite{ding}, are summarized in Table \ref{DefaultValues}. Moreover, we use the labels ``Pro.'' and ``Equ.'' to denote the proposed power control algorithm and the equal power allocation strategy, respectively.
\begin{table}[ht]
\small
\centering
\renewcommand{\arraystretch}{1.1}
\caption{Default Values of Simulation Parameters}
\begin{tabular}{|>{\centering\arraybackslash}m{5.5cm} | >{\centering\arraybackslash}m{2cm}|}
\hline
\textbf{Parameter} & \textbf{Values} \\
\hline
Carrier frequency $f_c$ & $28$ GHz\\
\hline
Height of the antenna $d$ & $3$ m\\
\hline
Number of waveguides $N$ & $1$-$7$ \\
\hline
Number of users $M$ & $2$-$3$ \\
\hline
Waveguide wavelength $\lambda_g$ in a dielectric waveguide & $\lambda/1.4$\\
\hline
Noise power $\sigma^2$ & $-90$ dBm\\
\hline
Bandwidth $W$ & $10$ MHz\\
\hline
Waveguide's feed point location $\psi_{n,0}$ & $(0,0,d)$\\
\hline
Positions of pinches $\psi_{n,i}$ & $(iD,nD,d)$\\
\hline
Positions of users $\tilde \psi_{n,m}$ & $(mD,nD,0)$\\
\hline
Interval of the waveguides and pinches $D$ & $20$ m\\
\hline
\end{tabular}
\label{DefaultValues}
\end{table}

Fig. \ref{fig_interation} demonstrates the convergence performance of the proposed power control algorithm for different configurations with the number of waveguides $N=2$ and $N=3$, as well as the transmission rates constraints $\bar R_{n,m}$ of $5$ Mbps, $10$ Mbps, and $15$ Mbps. Given the same initial conditions, the proposed algorithm successfully converges within merely ten iteration cycles, which demonstrates high convergence efficiency. Nevertheless, the convergence efficiency of our proposed algorithm is influenced by the transmission rates. Higher transmission rates necessitate higher power levels, leading to increased interference power during iteration, which consequently reduces the convergence efficiency. Furthermore, adding a waveguide to meet the transmission rates constraints of $5$ Mbps, $10$ Mbps, and $15$ Mbps requires additional transmit power level ratios of $51.5\%$, $57.3\%$, and $94.0\%$ respectively, which aligns with a $50\%$ increase in users and sum rate of the system. However, it is evident that the transmit power ratios increase with higher transmission rate constraints. As discussed previously, more stringent requirements of the transmission rate inevitably result in increased power interference among the users on different waveguides, which consequently compromises the system's performance. Hence, the interval of the waveguides and pinches should be adequately maintained to achieve low power consumption under high transmission rate constraints.
\begin{figure}
    \centering
    \includegraphics[width=0.73\linewidth]{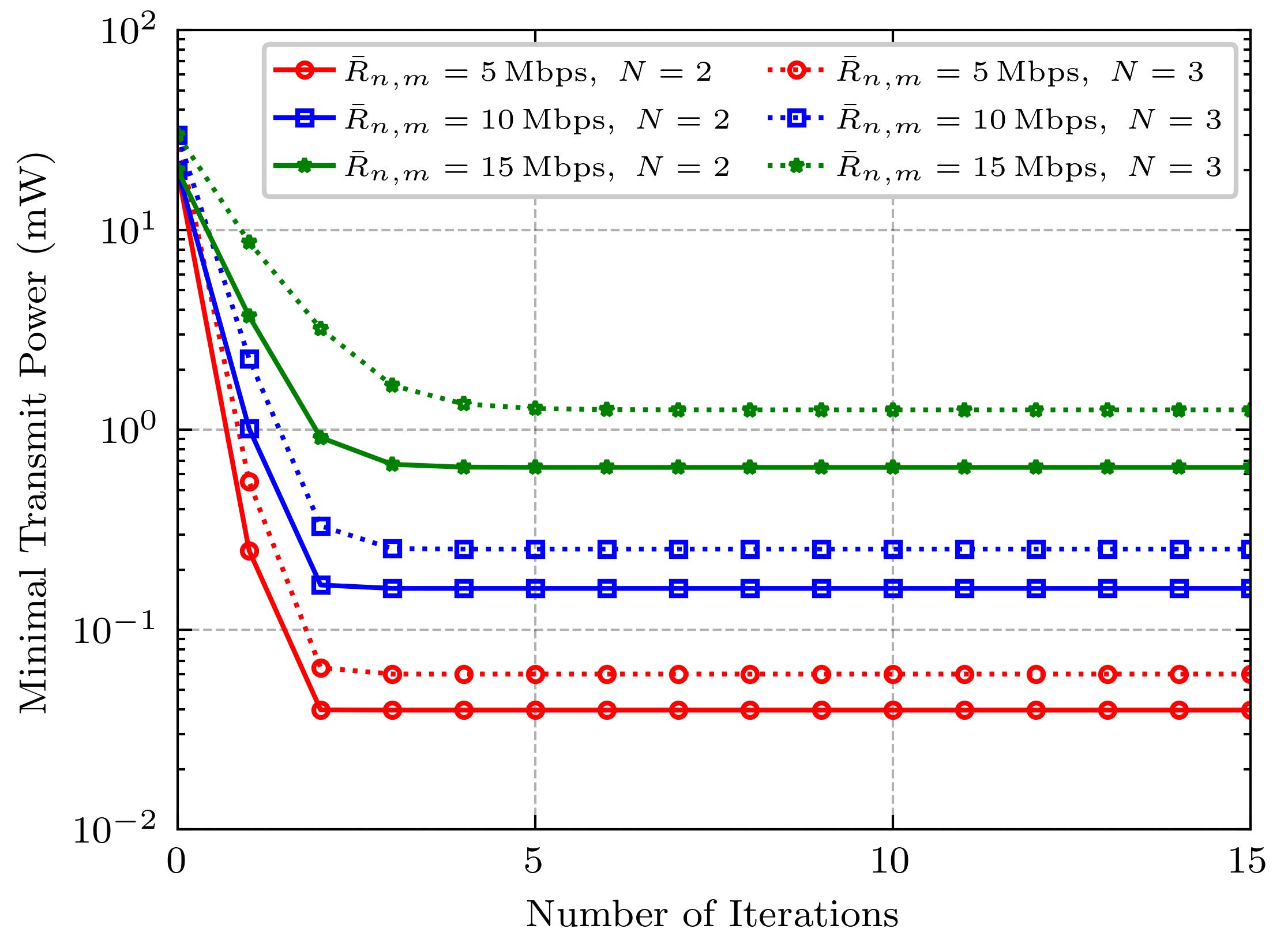}
    \caption{The minimal transmit power versus the number of iterations}
    \label{fig_interation}
\end{figure}

Fig. \ref{fig_waveguide} compares the minimal transmit power between the proposed power control algorithm and the equal power allocation algorithm. Taking the case of $N=2$ and $R=10$ Mbps as an example, our proposed algorithm reduces the minimal transmit power by $76.0\%$ and $93.2\%$ for $M=2$ in Fig. \ref{fig_waveguide:a} and $M=3$ in Fig. \ref{fig_waveguide:b}, respectively, compared to the equal power allocation algorithm. The better performance of our proposed algorithm results from the fact that the upper bound of the transmission rate $\bar R_{n,o_m}\le R_{n,o_m}\le W\log_2(1+\frac{1}{M-o_m})$ is unavoidable in the equal power allocation algorithm, which can be proved through \eqref{rate} and \eqref{sinr}. Consequently, under high transmission rate constraints, the equal power strategy obtains the suboptimal solution that are either excessively large or computationally infeasible to obtain, which can be observed from the case of $\bar R_{n,m}=15$ Mbps in Fig. \ref{fig_waveguide:b}. Furthermore, it is also illustrated in Fig. \ref{fig_waveguide:a} that the minimal transmit power increases with the number of waveguides. Specifically, in systems with few waveguides, adding new waveguides requires significantly more power, whereas the pinching antennas system with more waveguides demands nearly equal power when adding new waveguides, since the slope remains almost identical over the interval $N\in[4,7]$ in Fig. \ref{fig_waveguide:a}.
\begin{figure}[htbp]
    \centering
    \subfloat[$M=2$]{%
        \includegraphics[width=0.73\linewidth]{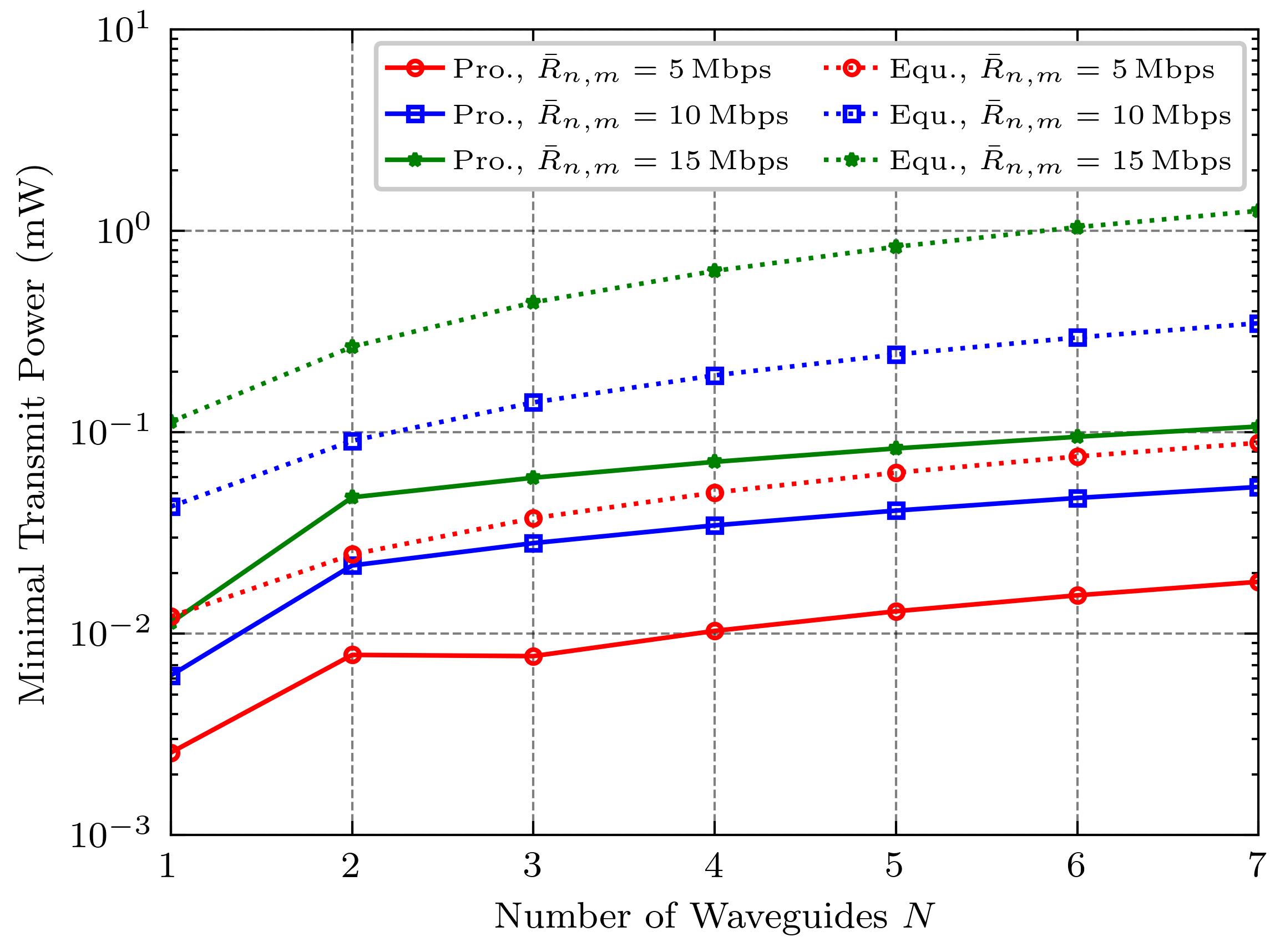}%
        \label{fig_waveguide:a}%
    }
    \hfill
    \subfloat[$M=3$]{%
        \includegraphics[width=0.73\linewidth]{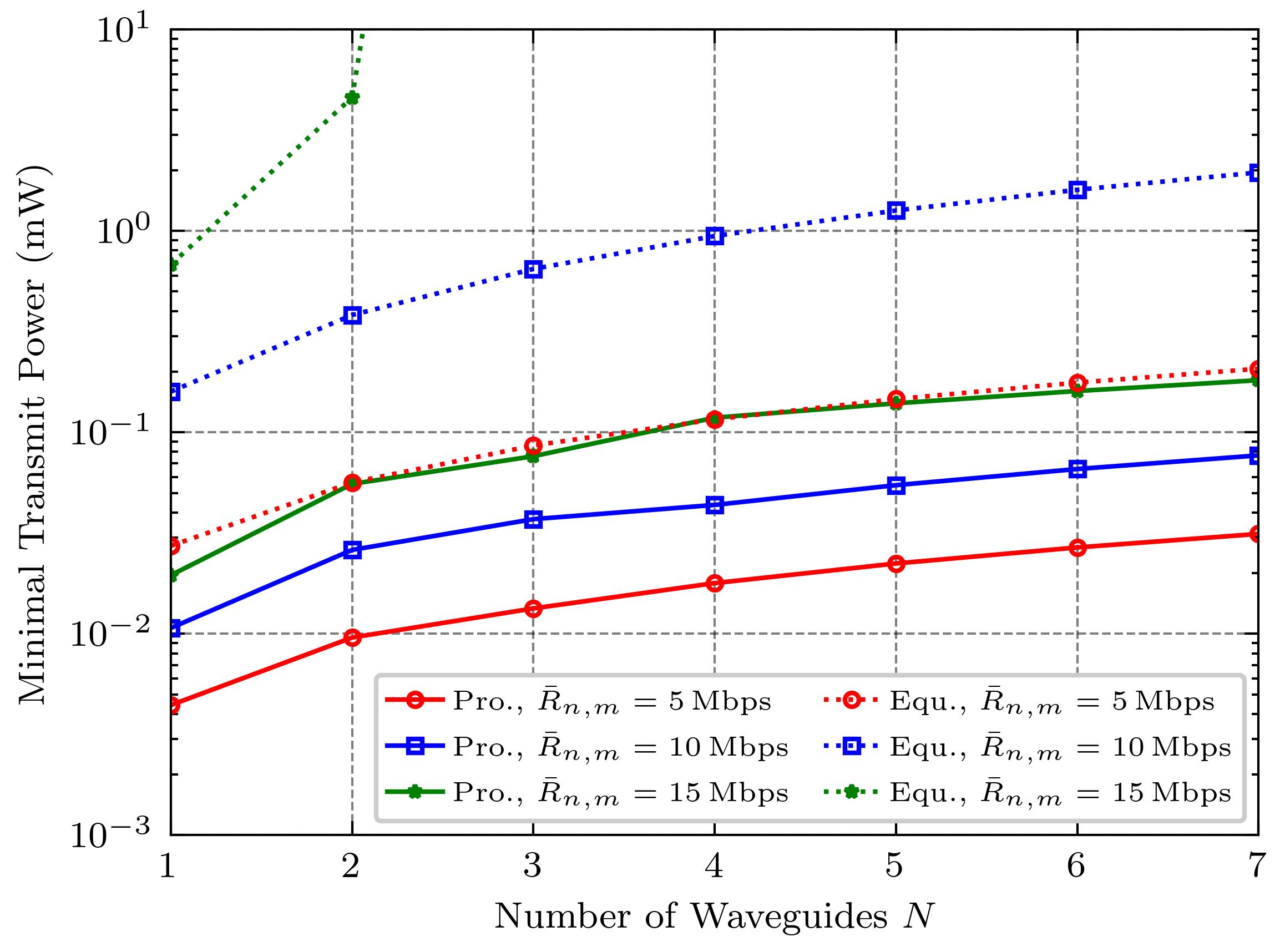}%
        \label{fig_waveguide:b}%
    }
    \caption{The minimal transmit power versus the number of waveguides.}
    \label{fig_waveguide}
\end{figure}
\begin{figure}[htbp]
    \centering
    \subfloat[$N=2$]{%
        \includegraphics[width=0.49\linewidth]{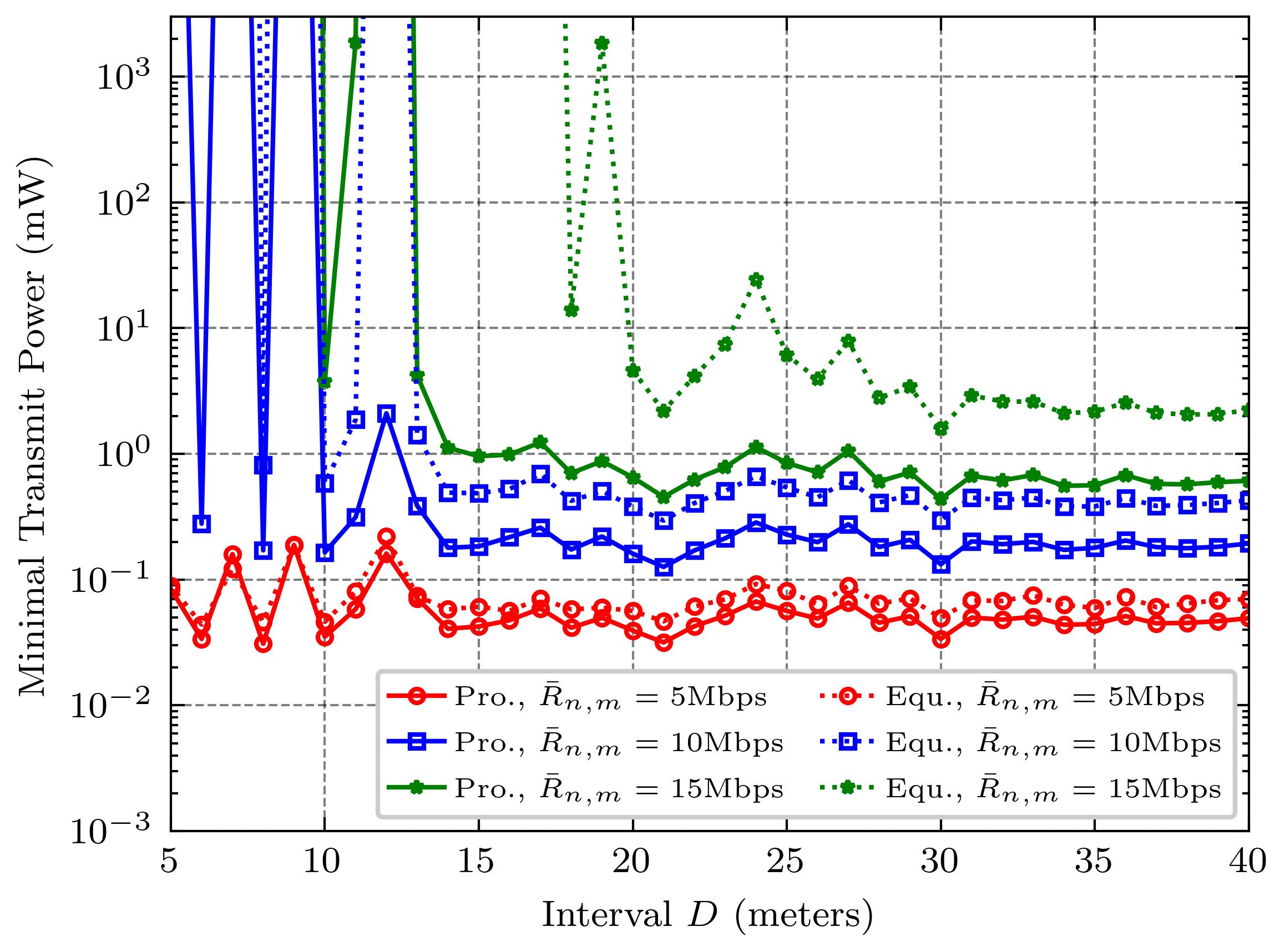}%
        \label{fig_interval:a}%
    }
    \subfloat[$N=3$]{%
        \includegraphics[width=0.49\linewidth]{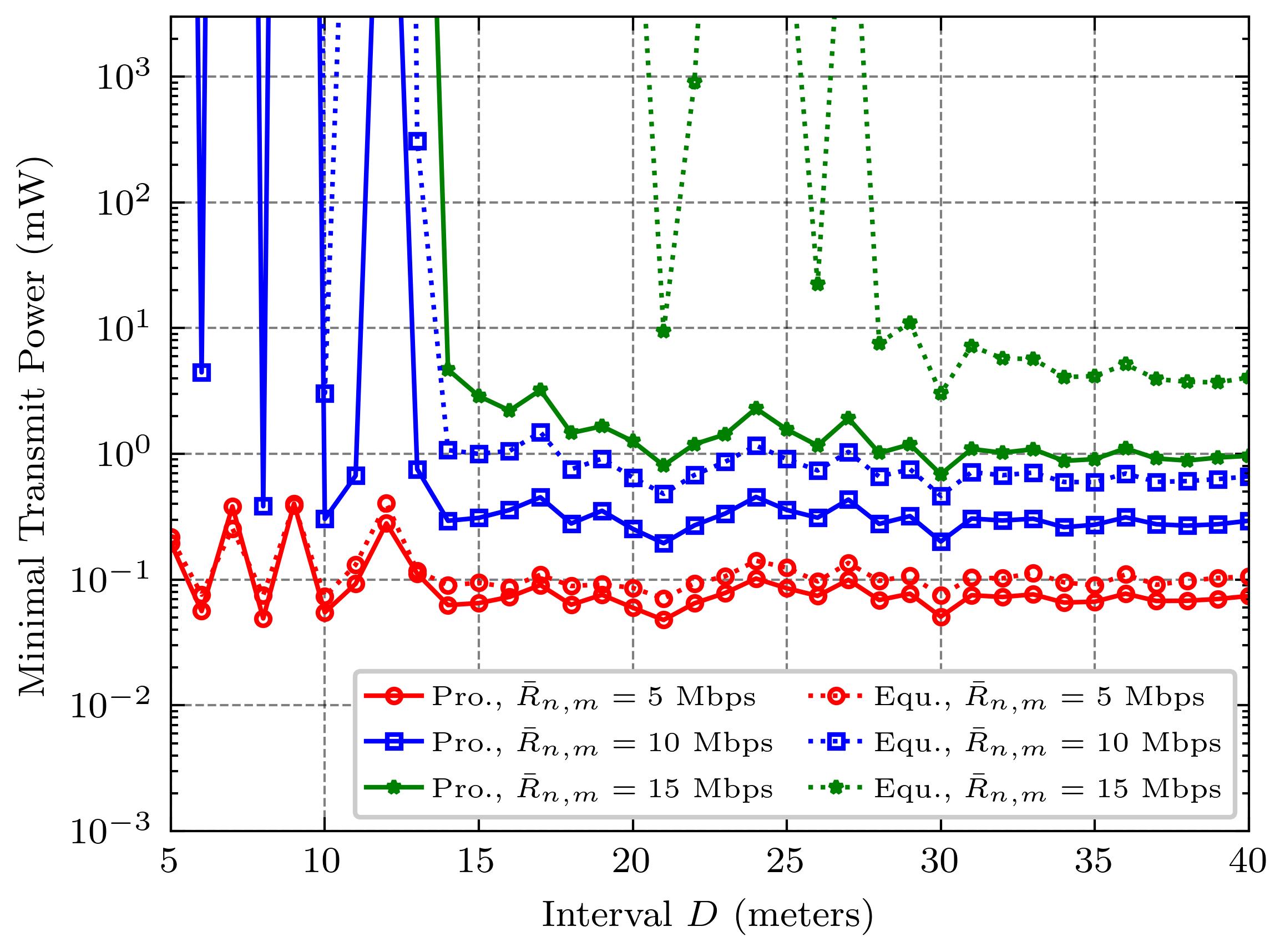}%
        \label{fig_interval:b}%
    }
    \hfill
    \subfloat[$\text{Zoom-in plot for } N=2$]{%
        \includegraphics[width=0.49\linewidth]{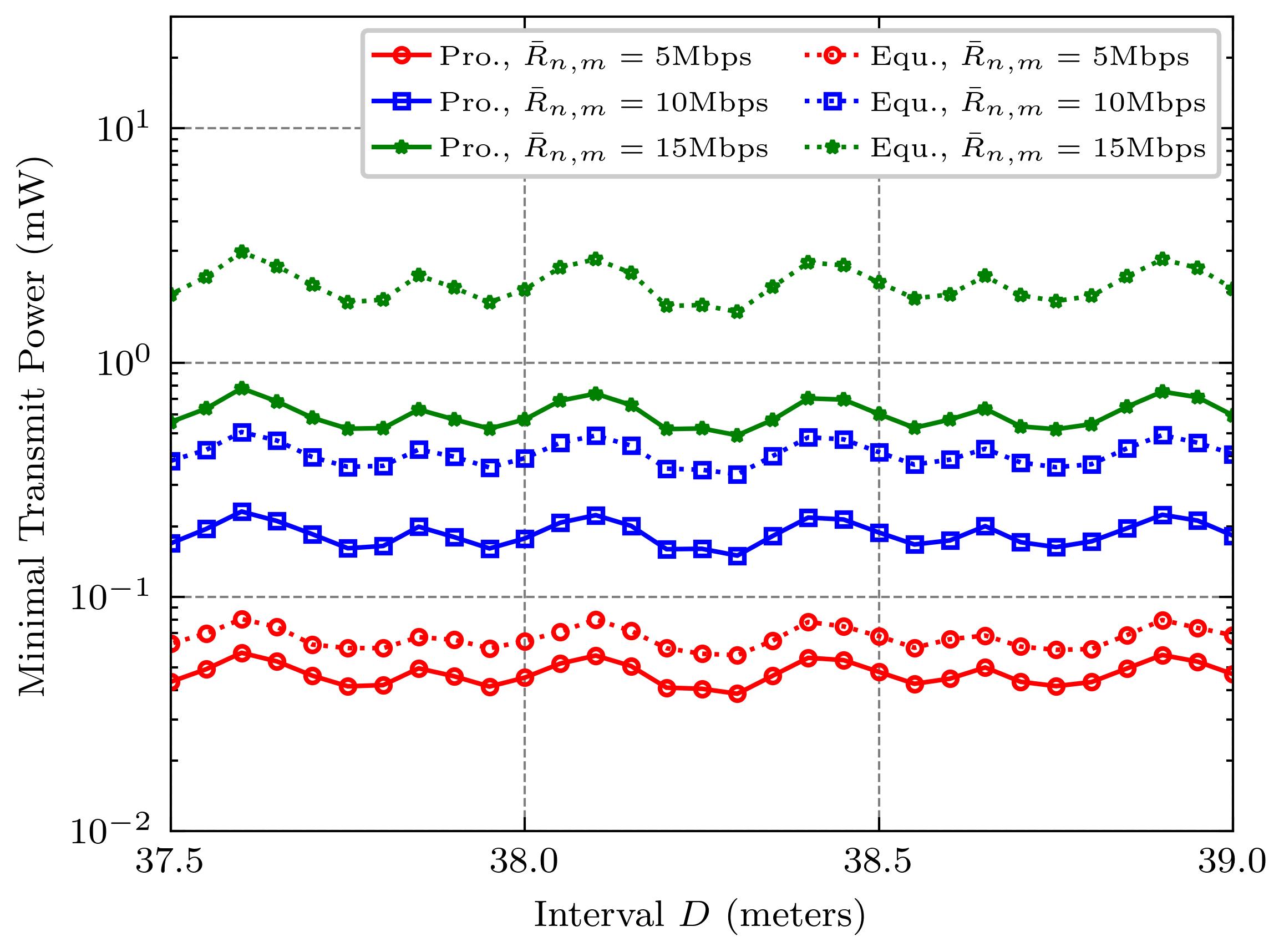}%
        \label{fig_interval_minor:a}%
    }
    \subfloat[$\text{Zoom-in plot for } N=3$]{%
        \includegraphics[width=0.49\linewidth]{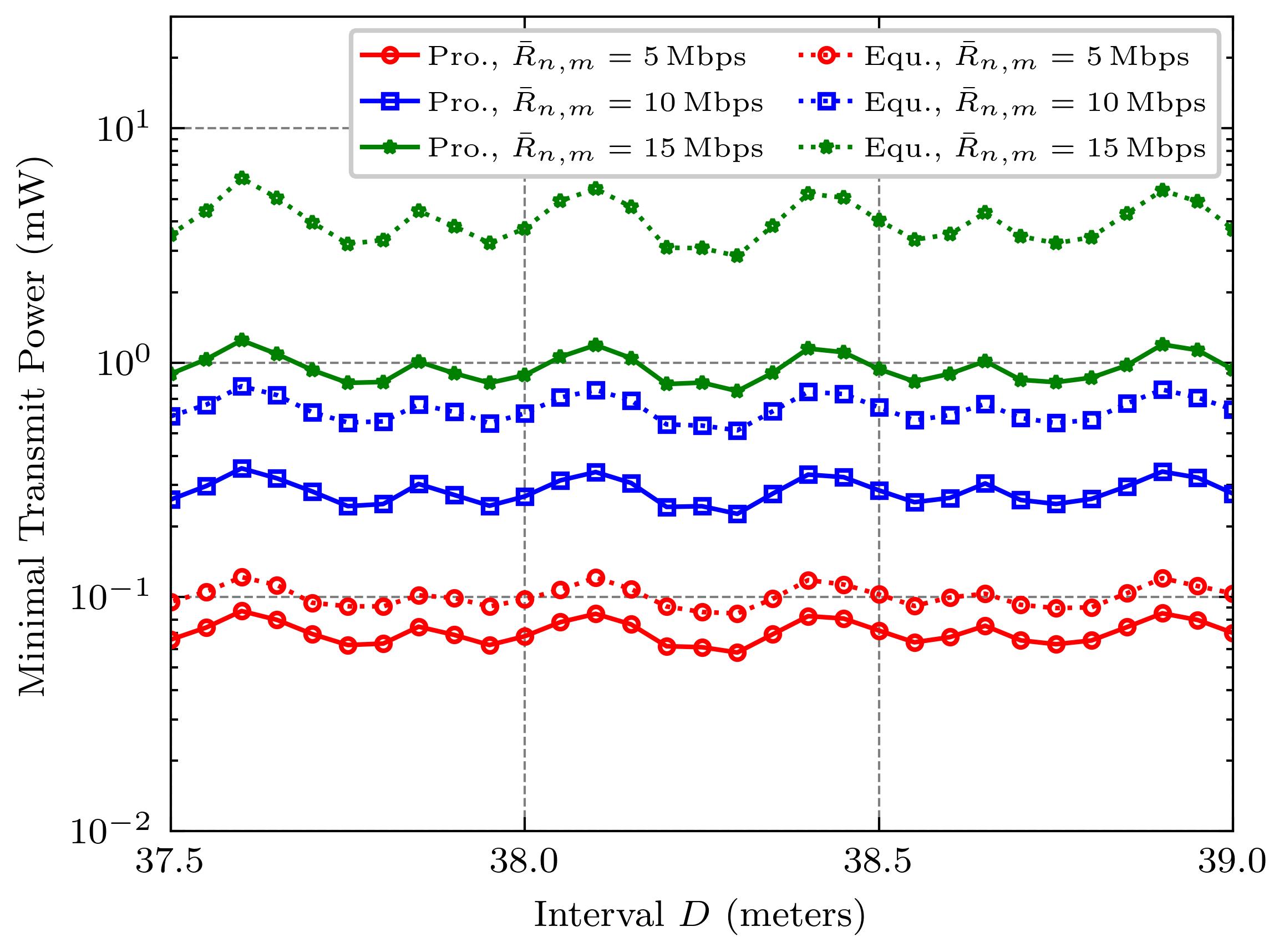}%
        \label{fig_interval_minor:b}%
    }
    \caption{The minimal transmit power versus the interval $D$.}
    \label{fig_interval}
\end{figure}

Fig. \ref{fig_interval} is displayed to reveal the relationship between the minimal transmit power and the interval of the waveguides and pinches. It is demonstrated that the minimal transmit power, as a function of the interval $D$, displays an \textit{approximate oscillatory decay} with a negative exponential envelope. The observed functional behavior motivates us to approximate it with the underdamped sine function in relation to the interval $D$. More specifically, within the long ranges of $D\in[1,40]$, the function exhibits decay with a negative exponential envelope in Fig. \ref{fig_interval:a} and Fig. \ref{fig_interval:b}, while within the short ranges of $D\in[37.5,39]$, the function demonstrates approximate periodicity in Fig. \ref{fig_interval_minor:a} and Fig. \ref{fig_interval_minor:b}. The fundamental reason lies in that the definition of channel gain $h_{m,n}$ in \eqref{channel_gain} can be represented by using the underdamped sine function. Therefore, the properties of the $\text{SINR}_{n,o_m}$ in \eqref{sinr}, even and the minimal transmit power, share similarities with the underdamped sine function. To precisely reveal the insights in Fig. \ref{fig_interval}, we take the case of $R=10$ Mbps in Fig. \ref{fig_interval:b} as an example. Specifically, each lower value of the minimal transmit power, i.e., $D=6$ meters, essentially indicates the better the channel condition, where the channel gain magnitude $|h_{n,m}|$ of the current waveguide significantly exceeds that of other waveguides. As the interval $D$ increases, the minimal transmit power approaches a constant. Since the increment $|\tilde \psi_{n,m}- \psi_{n,i}|=\sqrt{(m-i)^2D^2+d^2}$ increases, thereby leading to a corresponding decrease in the channel gain magnitude $|h_{n,m}|$. Finally, more power must be allocated to meet the constraints of the transmission rates. Besides, given the large interval $D$, it follows that the increment $|\tilde \psi_{n,m}- \psi_{n,i}|=\sqrt{(m-i)^2D^2+d^2}$ is the much larger coefficient for $\forall m \neq i$, compared to the increment $|\tilde \psi_{n,m}- \psi_{n,m}|=d$. Hence, the channel gain $h_{n,m}$ associated with user $m$ on waveguide $n$ is nearly independent of pinches far from the user. In other words, the $N$ correlated waveguides are decoupled into $N$ independent waveguides, and each channel gain magnitude $|h_{n,m}|$ is approximately expressed as $|h_{n,m}|\approx \sqrt{\eta}/d$. Evidently, the minimal transmit power of the NOMA-assisted pinching antenna systems tends to a constant as $D\to +\infty$. Besides, the decay rate and the degree of oscillation in Fig. \ref{fig_interval} are influenced by the transmission rate constraints, which implies the tradeoff between transmission rate and NOMA-assisted pinching antenna system stability. As the transmission rate increases, the interference among users on other waveguides undoubtedly becomes more severe. Therefore, the design of the interval $D$ plays a crucial role in determining the minimal transmit power allocation of the system.


\section{Conclusion}
This paper studied a classical power minimization problem for NOMA-assisted pinching antenna systems. In this setup, each dielectric waveguide is equipped with multiple pinching antennas that serve multiple NOMA users. We formulated a power allocation problem while ensuring that each user meets a minimum data rate requirement. To efficiently solve this non-convex optimization problem, we developed an iterative algorithm with a standard interference function that converges within a few steps. Extensive simulation results demonstrate the efficiency and superior power-saving capability of our approach compared to benchmark schemes. In future work, we plan to jointly optimize user grouping, antenna positioning, and power allocation for NOMA-aware pinching antenna networks, as these factors collectively influence the performance of NOMA-pinching antenna systems.

\section*{Acknowledgment}
The system model, problem formulation, and methodology design were contributed by Dr. Yaru Fu, while the simulation results and analysis were contributed by Mr. Fuchao He. Prof. Zheng Shi and Prof. Haijun Zhang assisted with proofreading.

\bibliographystyle{IEEEtran}
\bibliography{reference}
\end{document}